\title{Bayesian shrinkage priors subject to linear constraints}
\newif\ifuniqueAffiliation
\author{{Zhi Ling}\\
%\thanks{Use footnote for providing further information about author (webpage, alternative address)---\emph{not} for acknowledging funding agencies.}
% \href{https://orcid.org/0000-0002-8759-5902}{\includegraphics[scale=0.06]{orcid.pdf}\hspace{1mm}Swapnil Mishra} \\
	Saw Swee Hock School of Public Health\\
	National University of Singapore\\
	\texttt{lingzhi@nus.edu.sg} \\
    \And
	{Shozen Dan} \\
	Department of Mathematics\\
	Imperial College London\\
	\texttt{shozen.dan21@imperial.ac.uk} \\
%	\And
%	{Oliver Ratmann$^*$}  \\
%	Department of Mathematics\\
%	Imperial College London\\
%	\texttt{oliver.ratmann05@imperial.ac.uk}
%	\And
%	{Swapnil Mishra$^*$} \\
%	Saw Swee Hock School of Public Health\\
%	National University of Singapore\\
%	\texttt{swapnil.mishra@nus.edu.sg}
}
\newbox{\orcid}\sbox{\orcid}{\includegraphics[scale=0.06]{orcid.pdf}}
\author[1]{%
	\href{https://orcid.org/0000-0000-0000-0000}{\usebox{\orcid}\hspace{1mm}David S.~Hippocampus\thanks{\texttt{hippo@cs.cranberry-lemon.edu}}}%
}
\author[1,2]{%
	\href{https://orcid.org/0000-0000-0000-0000}{\usebox{\orcid}\hspace{1mm}Elias D.~Striatum\thanks{\texttt{stariate@ee.mount-sheikh.edu}}}%
}
\affil[1]{Department of Computer Science, Cranberry-Lemon University, Pittsburgh, PA 15213}
\affil[2]{Department of Electrical Engineering, Mount-Sheikh University, Santa Narimana, Levand}
\definecolor{codegrey}{RGB}{245,245,245}
\lstdefinestyle{lfonts}{
  basicstyle   = \footnotesize\ttfamily,
  stringstyle  = \color{purple},
  keywordstyle = \color{blue!60!black}\bfseries,
}
\lstdefinestyle{lnumbers}{
  numbers     = none,
  numberstyle = \tiny,
  numbersep   = 1em,
  firstnumber = 1,
  stepnumber  = 1,
}
\lstdefinestyle{llayout}{
  breaklines       = true,
  tabsize          = 2,
  columns          = fullflexible,
}
\lstdefinestyle{lgeometry}{
  xleftmargin      = 0pt,
  xrightmargin     = 0pt,
  frame            = tb,
  framesep         = \fboxsep,
  framexleftmargin = 0pt,
}
\lstdefinestyle{lothers}{
	showstringspaces=false,
}
\lstdefinestyle{lgeneral}{
  style = lfonts,
  style = lnumbers,
  style = llayout,
  style = lgeometry,
  style = lothers,
}
\newtheorem{theorem}{Theorem}
\newtheorem{lemma}[theorem]{Lemma}
\begin{document}
\maketitle

\begin{abstract}
In Bayesian regression models with categorical predictors, constraints are needed to ensure identifiability when using all $K$ levels of a factor. The sum-to-zero constraint is particularly useful as it allows coefficients to represent deviations from the population average. However, implementing such constraints in Bayesian settings is challenging, especially when assigning appropriate priors that respect these constraints and general principles. Here we develop a multivariate normal prior family that satisfies arbitrary linear constraints while preserving the local adaptivity properties of shrinkage priors, with an efficient implementation algorithm for probabilistic programming languages. Our approach applies broadly to various shrinkage frameworks including Bayesian Ridge, horseshoe priors and their variants, demonstrating excellent performance in simulation studies. The covariance structure we derive generalizes beyond regression models to any Bayesian analysis requiring linear constraints on parameters, providing practitioners with a principled approach to parameter identification while maintaining proper uncertainty quantification and interpretability.
\end{abstract}

% keywords can be removed
\keywords{Bayesian statistics \and Shrinkage prior \and Probabilistic programming language}

\section{Introduction}

In regression models with categorical predictors, parameter identifiability issues arise when all K levels of a factor are included. Consider a simple example where gender is included in a regression model: $$y_i = \beta_0 + \beta_{M} \text{Male}_i + \beta_{F}\text{Female}_i + ... + \epsilon_i.$$ Without additional constraints, this model is overparameterized since knowing one gender category perfectly predicts the other, leading to multicollinearity. Traditional approaches like treatment coding (setting one level as reference) work but have interpretability disadvantages - the intercept represents the mean for the reference group rather than the population average, and coefficients represent deviations from the reference group rather than from the overall mean. This complicates interpretation, especially with complex models involving multiple categorical variables or when no natural reference category exists.

A more interpretable approach retains all K coefficients while imposing the constraint that $\sum_{k=1}^K \beta_k = 0$ (sum-to-zero constraint), allowing $\beta_0$ to represent the population average and each categorical coefficient to represent the deviation from this average. However, implementing such constraints in Bayesian settings is challenging. One method redefines the last coefficient as $\beta_K = -\sum_{k=1}^{K-1} \beta_k$, which creates undesirable asymmetry among coefficients, complicates prior specification, and leads to difficult posterior geometry. Another approach implements "soft constraints" through highly informative priors (e.g., $\sum_k \beta_k \sim \mathcal{N}(0, \varepsilon)$ with very small $\varepsilon$), but this approximation can lead to numerical instabilities. We propose that the optimal solution is to directly incorporate linear constraints into prior distributions, preserving both identifiability and interpretability while enabling proper uncertainty quantification. Our approach allows for multiple arbitrary linear constraints on parameter vectors and includes efficient sampling algorithms for implementation in probabilistic programming languages. For common zero-sum constrained shrinkage priors in statistical practice, we provide four specializations as special cases of our proposed method: Bayesian ridge, hierarchical Bayesian ridge, horseshoe prior, and horseshoe prior variants. These specializations provide practitioners with ready-to-use solutions.

\section{Linear constraints via priors}

In this section, we develop a principled approach for incorporating linear constraints into Bayesian regression models with multivariate normal priors.

Given coefficient vector $\bm{\beta} = (\beta_1,\beta_2,...,\beta_K)^\top \in \mathbb{R}^K$, such that
\begin{equation}
\beta_k \mid \lambda_k \sim \mathcal{N}\left( 0, \lambda_k^2 \right), ~ k=1,2,...,K
\end{equation}
where $\lambda_k$ can be either a fixed hyperparameter or a random variable that forms a Gaussian scale mixture.

Consider a linear constraint \(A \bm{\beta} = b\), with constraint matrix $A\in \mathbb{R}^{J \times K}$ ($1\leq J \leq K-1$) with full row rank and constant vector $b \in \mathbb{R}^{K}$. We propose the following prior:
\begin{equation}
\begin{aligned}\label{eq.multinormal_cond}
\bm{\beta} \sim \mathcal{N}\left(D A^\top \left( A D A^\top \right)^{-1}b, \; D - D A^\top \left( A D A^\top \right)^{-1} A D \right)
\end{aligned}
\end{equation}
where $D = \text{diag}\left( \lambda_1^2, \lambda_2^2,...,\lambda_K^2  \right)$.

\begin{theorem}
The above prior family satisfies the given linear constraints almost everywhere.
\end{theorem}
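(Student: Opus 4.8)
The plan is to use the elementary but decisive fact that an affine image of a Gaussian vector is again Gaussian, so that I only have to track how the two parameters transform. Writing $\mu = D A^\top (A D A^\top)^{-1} b$ and $\Sigma = D - D A^\top (A D A^\top)^{-1} A D$ for the mean and covariance in \eqref{eq.multinormal_cond}, the vector $A\bm{\beta}$ is distributed as $\mathcal{N}\!\left(A\mu,\, A\Sigma A^\top\right)$ because $A$ is a fixed matrix. The theorem then collapses to the two algebraic identities $A\mu = b$ and $A\Sigma A^\top = 0$: a Gaussian whose covariance matrix vanishes is a point mass at its mean, so these identities together force $A\bm{\beta} = b$ with prior probability one. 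Intuitively, \eqref{eq.multinormal_cond} is nothing but the conditional law of an unconstrained $\mathcal{N}(0, D)$ given the event $A\bm{\beta} = b$, so it ought to concentrate on that event by construction; the proof just makes this explicit.

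First I would record that $A D A^\top$ is invertible, which is also what makes \eqref{eq.multinormal_cond} well defined and is the only place the hypotheses are genuinely used. Since the scales satisfy $\lambda_k^2 > 0$, the matrix $D$ is symmetric positive definite, and because $A$ has full row rank $J$, the $J \times J$ matrix $A D A^\top$ is symmetric positive definite and hence nonsingular. With this in hand the mean identity is immediate,
\begin{equation*}
A\mu = A D A^\top (A D A^\top)^{-1} b = b,
\end{equation*}
and the covariance computation telescopes,
\begin{equation*}
A \Sigma A^\top = A D A^\top - (A D A^\top)(A D A^\top)^{-1}(A D A^\top) = 0 .
\end{equation*}

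The remaining step is to pass from these identities to the almost-sure conclusion, and here the one point deserving care is that $\Sigma$ is singular: its rank is $K - J$, so $\mathcal{N}(\mu, \Sigma)$ must be read not as a density on $\mathbb{R}^K$ but as the Gaussian law supported on the affine subspace $\mu + \operatorname{range}(\Sigma)$. The cleanest way to avoid any subtlety about degenerate densities is the trace argument: using $A\mu = b$,
\begin{equation*}
\mathbb{E}\,\lVert A\bm{\beta} - b\rVert^2 = \operatorname{tr}\!\left(A\Sigma A^\top\right) = 0 ,
\end{equation*}
and a nonnegative random variable with zero expectation vanishes almost surely, giving $A\bm{\beta} = b$ with prior probability one. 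This is exactly the claimed "almost everywhere" statement, the exceptional null set being the complement of the support. I do not expect a real obstacle—the computation is essentially the standard formula for conditioning a Gaussian on a linear event—so the only thing to be vigilant about is resisting the temptation to treat $\mathcal{N}(\mu,\Sigma)$ as absolutely continuous on $\mathbb{R}^K$, since its support has Lebesgue measure zero there.
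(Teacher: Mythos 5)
Your proof is correct, but it takes a genuinely different route from the paper's. The paper \emph{derives} the prior as the conditional law of an unconstrained $\mathcal{N}(\bm{\mu},D)$ given the auxiliary observation $A\bm{\beta}=b$, writing down the joint Gaussian of $(\bm{\beta},A\bm{\beta})$ and invoking the standard conditional-Gaussian formulas; the constraint is then satisfied ``by construction.'' You instead take the stated mean and covariance at face value and \emph{verify} the constraint directly: $A\bm{\beta}\sim\mathcal{N}(A\mu,\,A\Sigma A^\top)$, the two identities $A\mu=b$ and $A\Sigma A^\top=0$ follow by cancellation against $(ADA^\top)^{-1}$, and the trace argument $\mathbb{E}\lVert A\bm{\beta}-b\rVert^2=\operatorname{tr}(A\Sigma A^\top)=0$ upgrades this to an almost-sure statement. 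Your version is arguably the tighter proof of the literal claim: conditioning on the measure-zero event $\{A\bm{\beta}=b\}$ requires a disintegration argument to conclude that the resulting law is supported on that event, whereas your expectation computation is elementary and self-contained, and you are right to flag that the degenerate $\mathcal{N}(\mu,\Sigma)$ has no density on $\mathbb{R}^K$. What the paper's derivation buys in exchange is motivation --- it explains where the formula in Eq.~\ref{eq.multinormal_cond} comes from and why $\bm{\mu}=\mathbf{0}$ is the natural centering --- and it reuses Lemma~\ref{lemma.ADAinvertible} for the invertibility of $ADA^\top$, which you re-establish inline via positive definiteness of $D$ and full row rank of $A$. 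The two arguments are complementary rather than in conflict.
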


\begin{proof}
Consider $\bm{\beta} \sim \mathcal{N}(\bm{\mu}, D)$. Let $\bm{z}=A \bm{\beta}$ represent an auxiliary observation derived from $\bm{\beta}$. The covariance structure is given by:
\begin{equation}
\operatorname{Cov}(\bm{\beta},A\bm{\beta})=D A^\top,\quad \operatorname{Cov}(A\bm{\beta}, A\bm{\beta})=A D A^\top.
\end{equation}

Therefore, the joint distribution can be written as:
\[
\begin{bmatrix}
\bm{\beta} \\
\bm{z}
\end{bmatrix}
\sim \mathcal{N}\left(
\begin{bmatrix}
\bm{\mu} \\
A\bm{\mu}
\end{bmatrix},
\begin{bmatrix}
D & D A^\top \\
A D & A D A^\top
\end{bmatrix}
\right).
\]

By standard results for conditional multivariate normal distributions, we have:
\begin{equation}
\bm{\beta}\mid A\bm{\beta}=b \sim \mathcal{N}\left(m^*, \Sigma^*\right),
\end{equation}
where the conditional mean is:
\begin{equation}
m^* = \bm{\mu} + D A^\top (A D A^\top)^{-1} (b - A\bm{\mu})
\end{equation}
and the conditional covariance matrix is:
\begin{equation}\label{eq.cov_matrix}
\Sigma^* = D - D A^\top (A D A^\top)^{-1} A D.
\end{equation}
The invertibility of $A D A^\top$ is given by Lemma \ref{lemma.ADAinvertible}.

In the absence of specific prior information, it is often desirable to employ a neutral prior centered at zero for regularization purposes. Setting $\bm{\mu}=\mathbf{0}$ yields the form presented in Eq. \ref{eq.multinormal_cond} by letting:
\begin{equation}
\bm{\beta}^*  \sim \mathcal{N}\left( D A^\top \left( A D A^\top \right)^{-1}b, \; D - D A^\top \left( A D A^\top \right)^{-1} A D \right).
\end{equation}
\end{proof}

By adjusting the covariance structure in this manner, we effectively reduce the parameter space to values that almost surely satisfy the constraint. This approach implements hard parameter constraints solely through the prior distribution. Although this conditional method does modify the original prior distribution, we can largely preserve the desired marginal distributions by appropriately scaling the diagonal elements of the covariance matrix, minimizing distortion to the original model specification. The specific scaling factors and their effects on the resulting posterior distributions will be explored in detail in Section \ref{sec.specialization}, along with practical guidelines for implementation in various modeling contexts.

\section{Sampling algorithm}

The the covariance matrix (Eq. \ref{eq.cov_matrix}) we derived is singular (see Lemma \ref{lemma.covps}). Therefore, the multivariate normal implementations in most mathematical libraries cannot directly sample from it. This section gives a general sampling algorithm and Stan implementation.

We want to sample from
$$
\bm{\beta} \sim \mathcal{N}\Bigl(m^*,\,\Sigma^*\Bigr)
$$
with
$$
m^* = D A^\top (A D A^\top)^{-1} b,\quad \Sigma^* = D - D A^\top (A D A^\top)^{-1} A D,
$$
knowing that $\Sigma^*$ is singular with rank $K-J$, where $J=\operatorname{rowrank}(A)$, $1\leq J \leq K-1$ (see Lemma \ref{lemma.covrank}).

The key idea is to sample in the full‐rank subspace where $\Sigma^*$ is nondegenerate and then map the sample back to $\mathbb{R}^K$. One common approach is to obtain an orthonormal basis for the null space of $A$. We formulate the procedure below.

\begin{algorithm}[H]
\caption{Sample $\bm{\beta} \sim \mathcal{N}(m^*,\,\Sigma^*)$}
\KwIn{Matrix $A\in\mathbb{R}^{J\times K}$, vector $b\in\mathbb{R}^{J}$, and covariance matrix $D\in\mathbb{R}^{K\times K}$}
\KwOut{Sample $\bm{\beta}\in\mathbb{R}^K$ drawn from $\mathcal{N}\bigl(m^*,\,\Sigma^*\bigr)$}

% Step 1: Compute m* and Sigma*
Compute
$$
m^* = D A^\top (A D A^\top)^{-1} b,
$$
and
$$
\Sigma^* = D - D A^\top (A D A^\top)^{-1} A D.
$$

% Step 2: Find orthonormal basis for the nullspace of A
Obtain an orthonormal basis $M\in\mathbb{R}^{K\times (K-J)}$ for
$$
N(A)=\{x\in\mathbb{R}^K: Ax=0\},
$$
such that
$$
AM=0 \quad \text{and} \quad M^\top M=I_{K-J}.
$$
This can be done by performing an SVD of $A$ and taking the last $(K-J)$ columns of the right singular vectors (see Algorithm \ref{algo.orthonormal_basis} for this).

% Step 3: Compute Omega and its Cholesky factorization
Compute
$$
\Omega = M^\top \Sigma^* M \in \mathbb{R}^{(K-J)\times (K-J)}.
$$
Obtain the Cholesky factorization (this is guaranteed by Lemma \ref{lemma.lowdimrank})
$$
\Omega = L L^\top,
$$
with $L\in\mathbb{R}^{(K-J)\times (K-J)}$.\;

% Step 4: Draw standard normal sample in the subspace
Draw a sample
$$
z\sim\mathcal{N}\bigl(0,I_{K-J}\bigr).
$$

% Step 5: Construct the sample beta
Set
$$
\bm{\beta} = m^* + M L z.
$$

\textbf{return} $\bm{\beta}$\;
\end{algorithm}

\begin{theorem}
The above procedure returns sample from Eq. \ref{eq.multinormal_cond}.
\end{theorem}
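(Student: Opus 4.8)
The plan is to show that the random vector $\bm{\beta} = m^* + MLz$ produced by the algorithm is multivariate normal with the correct mean and covariance, and that it satisfies the constraint $A\bm{\beta} = b$ almost surely. Since $z \sim \mathcal{N}(0, I_{K-J})$ and $\bm{\beta}$ is an affine transformation of $z$, the output is automatically Gaussian; it therefore suffices to match its first two moments against those of $\mathcal{N}(m^*, \Sigma^*)$.

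First I would compute the mean: since $\mathbb{E}[z] = 0$, we get $\mathbb{E}[\bm{\beta}] = m^* + ML\,\mathbb{E}[z] = m^*$, which matches immediately. Next I would compute the covariance. Using $\operatorname{Cov}(z) = I_{K-J}$ and $\Omega = LL^\top$, we have
\begin{equation}
\operatorname{Cov}(\bm{\beta}) = ML\,\operatorname{Cov}(z)\,L^\top M^\top = M L L^\top M^\top = M \Omega M^\top = M (M^\top \Sigma^* M) M^\top.
\end{equation}
The remaining task is to verify that $M M^\top \Sigma^* M M^\top = \Sigma^*$. This is where the structure of $\Sigma^*$ must be used: I would argue that $\Sigma^*$ has column space contained in the null space $N(A)$, so that $MM^\top$ — the orthogonal projector onto $N(A)$ — acts as the identity on both the left and right of $\Sigma^*$. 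Concretely, since $\Sigma^* = D - DA^\top(ADA^\top)^{-1}AD$, a direct calculation gives $A\Sigma^* = AD - (ADA^\top)(ADA^\top)^{-1}AD = 0$, so every column of $\Sigma^*$ lies in $N(A)$; by symmetry the same holds for rows. Because the columns of $M$ form an orthonormal basis for $N(A)$, the projector $P = MM^\top$ satisfies $P\Sigma^* = \Sigma^* P = \Sigma^*$, and hence $M\Omega M^\top = P\Sigma^* P = \Sigma^*$, as required.

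Finally I would check the constraint is satisfied exactly. Applying $A$ to the output gives $A\bm{\beta} = A m^* + A M L z$; the term $AMLz$ vanishes because $AM = 0$ by construction of $M$, and $A m^* = A D A^\top (A D A^\top)^{-1} b = b$. Thus $A\bm{\beta} = b$ holds deterministically for every draw, consistent with the first theorem's almost-everywhere statement and in fact sharpening it to hold surely.

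The main obstacle is the covariance-matching step, specifically establishing the projection identity $MM^\top \Sigma^* M M^\top = \Sigma^*$ rather than merely $M^\top \Sigma^* M = \Omega$. The crux is recognizing that the column space of $\Sigma^*$ lies in $N(A)$, so that collapsing to the $(K-J)$-dimensional coordinate system via $M$ and lifting back loses no information; I would lean on the invertibility of $ADA^\top$ (guaranteed by the cited lemma) to make the cancellation $A\Sigma^* = 0$ rigorous, and on the orthonormality $M^\top M = I_{K-J}$ together with $AM = 0$ to identify $MM^\top$ as the orthogonal projector onto $N(A)$.
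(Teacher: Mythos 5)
Your proposal is correct and follows the same basic strategy as the paper: the output is an affine transformation of a Gaussian, so it suffices to match the mean and covariance of $m^* + MLz$ against $\mathcal{N}(m^*,\Sigma^*)$. In fact your write-up is more complete than the paper's own proof, which ends with the chain $MLL^\top M^\top = M\Omega M^\top = \Sigma^*$ and treats the final equality as immediate. It is not: since $\Omega = M^\top \Sigma^* M$, one has $M\Omega M^\top = MM^\top \Sigma^* MM^\top$, and recovering $\Sigma^*$ requires exactly the argument you supply --- namely that $A\Sigma^* = AD - (ADA^\top)(ADA^\top)^{-1}AD = 0$ places the column space of $\Sigma^*$ inside $N(A)$, so the orthogonal projector $MM^\top$ onto $N(A)$ acts as the identity on $\Sigma^*$ from both sides. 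Your additional verification that $A\bm{\beta} = b$ holds surely for every draw (via $AM = 0$ and $Am^* = b$) is also a worthwhile check that the paper omits from this proof. No gaps.
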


\begin{proof}
The mean is
$$
E[\bm{\beta}] = m^* + M L \mathbb{E}[z] = m^*;
$$
The covariance is
$$
\operatorname{Var}(\bm{\beta}) = M L\,\operatorname{Var}(z)\,L^\top M^\top = M L L^\top M^\top = M\,\Omega\,M^\top = \Sigma^*.
$$
\end{proof}

Therefore, the above algorithm first simplifies the sampling problem to a $K-J$ subspace where the covariance is non-degenerate, and then lifts the samples to $\mathbb{R}^K$. This achieves the construction of samples from the degenerate conditional Gaussian distribution while ensuring that the constraint $A\bm{\beta} = b$ is satisfied.

\section{Sum-to-zero specialization}
\label{sec.specialization}

In this section, we demonstrate how the covariance structure can be tailored to different modeling needs, especially sum-to-zero constraints.

Note that in GP regression, the constraint matrix $A$ only required to have full row rank. Then it is possiable to set
$A = (1,1,...,1) \in\mathbb{R}^{1\times K}$ and $b=0 \in \mathbb{R}$. Therefore, the constrain becomes
\begin{equation}
     \sum_{k=1}^K \beta_k = 0.
\end{equation}
i.e., sum-to-zero constraints.

We present four specialization of our multivariate normal structure for the commonly encountered sum-to-zero constraint in statistical practice: Bayesian ridge, herarchical Bayesian ridge, horseshoe prior, and variants of the horseshoe prior. For each prior type, we derive the explicit form of the covariance matrix, discuss the implications on mathematical details, and provide efficient computational approaches that can be directly implemented in probabilistic programming languages. These specializations provide practitioners with ready-to-use solutions for imposing sum-to-zero constraints while maintaining the desirable shrinkage properties of each prior family.

\subsection{Bayesian ridge}

Bayesian ridge regression uses a normal prior which imposes a global $\ell_2$ regularization:
\begin{equation}
\bm{\beta} \sim \mathcal{N}(\bm{0}, \bm{I})
\end{equation}
where $\bm{I}$ denotes the identity matrix. By choosing $D=\bm{I}$ in Eq. \ref{eq.multinormal_cond}, the sum-to-zero constraint can be achieved by assuming the following prior
\begin{equation}
\bm{\beta} \sim \mathcal{N}\left( \bm{0},~ \sigma^2\left( \bm{I}-\frac{1}{K}\bm{1}\bm{1}^\intercal \right) \right).
\end{equation}
$\sigma^2 = \frac{K}{K-1}$ is chosen to ensure the diagonal element of covariance matrix to be $\Sigma_{k,k} = \sigma^2 \left( 1 - \frac{1}{N} \right) = 1.$ So the marginal distribution are therefore preserved:
\begin{equation}
\beta_k \sim \mathcal{N}(0, 1), k=1,2,...,K.
\end{equation}

\begin{algorithm}[H]\label{algo.sumzeroridge}
\caption{Sample from sum-to-zero Bayesian ridge}
\KwIn{A vector $x\in\mathbb{R}^{K-1}$ drawn from the standard normal distribution $\mathcal{N}(0,I_{K-1})$}
\KwOut{A vector $\bm{\beta}$ s.t. margianl normal and $1^\top \bm{\beta} = 0$. }

Compute
$$
\Sigma = I_K - \frac{1}{K}\bm{1}\bm{1}^\intercal
$$
then scale the covariance matrix as
$$
\Sigma  = \frac{K}{K-1}~\Sigma \;.
$$

Obtain the mapping matrix $M\in\mathbb{R}^{K\times (K -1)}$ from Algorithm \ref{algo.fastsumzerobasis}

Compute the matrix
$$
S = M^\top \Sigma  M \in \mathbb{R}^{(K-1)\times (K-1)}\;.
$$
Obtain the Cholesky factorization:
$$
L = \text{cholesky\_decompose}(S)\;.
$$
Compute
$$
\bm{\beta}  = M\,(L\, x)\;.
$$
\Return $\bm{\beta} $\;
\end{algorithm}

\subsection{Hierarchical Bayesian ridge}

It's possiable to introducing an additional scale factor controling global shrinkage level.
\begin{equation}
\bm{\beta} \sim \mathcal{N}\left( \bm{0}, \lambda^2 \bm{I} \right)
\end{equation}
The samller the $\lambda$, the stronger the $\ell_2$ penalty is. It is also a common choice to introduce partial pooling for variance parameters allowing adaptation to variability from the data
\begin{equation}
\lambda \sim \text{Cauchy}^+(0,1).
\end{equation}
or any other distribution with support $\mathbb{R}^+$.

By choosing $D=\lambda^2\bm{I}$ in Eq. \ref{eq.multinormal_cond}, the sum-to-zero constraint can be achieved by assuming the following prior
\begin{equation}
\bm{\beta} \sim \mathcal{N}\left( \bm{0},~ \lambda^{2}\sigma^2\left( \bm{I}-\frac{1}{K}\bm{1}\bm{1}^\intercal \right) \right).
\end{equation}
$\sigma^2 = \frac{N}{N-1}$ is chosen so the marginal distribution are preserved:
\begin{equation}
\beta_k \sim \mathcal{N}(0, \lambda^{2}), k=1,2,...,K.
\end{equation}

When sampling, only need to carry out non-centered parameterization,
\begin{equation}
\bm{\beta} = \lambda \bm{\beta^*}
\end{equation}
and then use Algorithm \ref{algo.sumzeroridge} to sample $\bm{\beta^*}$.

\subsection{Horseshoe prior}

The Horseshoe prior \citep{Carvalho2010} adopts a local-global hierarchical structure for adaptive shrinkage in regression analysis. For each regression coefficient,
\begin{equation}
\begin{aligned}
\beta_k &=  \mathcal{N}(0, \tau^2\lambda_k^2), k=1,2,...,K \\
\lambda_j &\sim C^+(0, 1) \\
\tau &\sim C^+(0, 1)
\end{aligned}
\end{equation}
The global scale parameter $\tau$ controls the overall shrinkage level, while the local scale parameter $\lambda_j$ adjusts the individual shrinkage for each coefficient.

By choosing $D=\tau^2 \text{diag}\left( \lambda_1^2, \lambda_2^2,...,\lambda_K^2  \right)$ in Eq. \ref{eq.multinormal_cond}, the sum-to-zero constraint can be achieved by assuming the following prior
\begin{align}
\pmb{\beta} \mid D \sim \mathcal{N}\left(0, \Sigma\right), ~ \Sigma=D - D \mathbf{1} \left( \mathbf{1}^\top D \mathbf{1} \right)^{-1} \mathbf{1}^\top D
\end{align}

The constraint of sum to zero introduces negative correlations among $\beta_k$, which are reflected in the off-diagonal elements of the adjusted covariance matrix $\tilde{\Sigma}$. The covariance between $\beta_k$ and $\beta_j$ is given by:
\begin{equation}
\operatorname{Cov}(\beta_k, \beta_j) = -\tau^2 \frac{\lambda_k^2 \lambda_j^2}{\sum_{l=1}^K \lambda_l^2}, \quad j \neq k
\end{equation}

The marginal variance of each $\beta_k$ is:
\begin{equation}
\operatorname{Var}(\beta_k) = \tau^2 \left( \lambda_k^2 - \frac{\lambda_k^4}{\sum_{l=1}^K \lambda_l^2} \right)
\end{equation}

Compared to the desired $\tau^2 \lambda_k^2$, this variance is slightly reduced due to the presence of the constraint. Therefore, in the constrained covariance matrix, the marginal variance could be compensated by multiplying $D$ by the factor $K / (K - 1)$, which assumes that all $\lambda_k$ are equal.

\begin{algorithm}[H]\label{algo.sumzerohorseshoe}
\caption{Sample from the sum-to-zero horseshoe prior}
\label{algo.adj_cov_cholesky}
\KwIn{A vector $x\in\mathbb{R}^{K-1}$ drawn from the standard normal distribution $\mathcal{N}(0,I_{K-1})$. Covariance matrix $D$.}
\KwOut{A vector $\bm{\beta}$ s.t. margianl horseshoe and $1^\top \bm{\beta} = 0$. }
Compute
$$
\Sigma = D - D \mathbf{1} \left( \mathbf{1}^\top D \mathbf{1} \right)^{-1} \mathbf{1}^\top D
$$
then scale the covariance matrix as
$$
\Sigma  = \frac{K}{K-1}~\Sigma \;.
$$

Obtain the mapping matrix $M\in\mathbb{R}^{K\times (K -1)}$ from Algorithm \ref{algo.fastsumzerobasis}

Compute the matrix
$$
S = M^\top \Sigma  M \in \mathbb{R}^{(K-1)\times (K-1)}\;.
$$
Obtain the Cholesky factorization:
$$
L = \text{cholesky\_decompose}(S)\;.
$$
Compute
$$
\bm{\beta}  = M\,(L\, x)\;.
$$
\Return $\bm{\beta} $\;

\end{algorithm}

It can also be used to first separate the global shrinkage parameter $\tau$ by non-central parameterization before running the algorithm.

\subsection{Horseshoe-like prior}

The proposed method is also applicable to variants of the horseshoe prior. The regularized horseshoe prior (RHS) \citep{RHS2017b} is a class of hierarchical priors designed to address sparsity in high dimension coefficients. In contrast to the traditional horseshoe prior, RHS improves model robustness by imposing moderate regularization on large coefficients through adjustments to the slab width. RHS is formulated as
\begin{equation}
\begin{aligned}
\beta_k \mid \zeta_k, c, \tau &\sim \mathcal{N}\left( 0, \tau^2 \tilde{\zeta}_k^2 \right), \quad k=1,2,...,K \\
\zeta_k &\sim \text{Student-}t^+_{\nu_1} (0,1) \\
c^2 &\sim \text{Inv-Gamma}(\nu_2/2, \nu_2s^2/2) \\
\tau &\sim \text{Student-}t^+_{\nu_3} (0, \tau_0)
\end{aligned}
\end{equation}
where $\tilde{\zeta}_k^2=\frac{c^2 \zeta_k^2}{c^2 + \tau^2 \zeta_k^2}$, and $\tau_0 = \frac{p_0}{K-p_0} \frac{\tilde\sigma}{\sqrt{n}}$. $p_0\in \left\{ 1,...,K-1 \right\}$ is a hyperparameter describing the prior belief on effective number of non-zero coefficients. $\tilde\sigma^2$ is the pseudo variance define by the likelihood and link function.

By choosing $D=\tau^2 \text{diag}\left( \tilde{\zeta}_1^2, \tilde{\zeta}_2^2,...,\tilde{\zeta}_K^2  \right)$ in Eq. \ref{eq.multinormal_cond}, the sum-to-zero constraint can be achieved by assuming the following prior
\begin{align}
\pmb{\beta} \mid D \sim \mathcal{N}\left(0, \Sigma\right), ~ \Sigma=D - D \mathbf{1} \left( \mathbf{1}^\top D \mathbf{1} \right)^{-1} \mathbf{1}^\top D.
\end{align}

When sampling, only need to carry out non-centered parameterization,
\begin{equation}
\bm{\beta} = \tau \bm{\beta^*}
\end{equation}
and then use Algorithm \ref{algo.sumzerohorseshoe} to sample $\bm{\beta^*}$.

\appendix
\renewcommand{\theequation}{\thesection.\arabic{equation}}

\section{Supplementary materials}

\subsection{Supplementary proofs}

\begin{lemma}\label{lemma.ADAinvertible}
The matrix $A D A^\top$ is invertible.
\end{lemma}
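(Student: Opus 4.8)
The plan is to prove the stronger statement that $A D A^\top$ is symmetric positive definite, from which invertibility follows immediately. The two structural facts I would lean on are: (i) $D = \operatorname{diag}(\lambda_1^2,\dots,\lambda_K^2)$ is positive definite, since each diagonal entry $\lambda_k^2 > 0$; and (ii) $A$ has full row rank $J$, so its transpose $A^\top \in \mathbb{R}^{K\times J}$ has full column rank $J$, meaning the only solution of $A^\top v = \mathbf{0}$ is $v = \mathbf{0}$.

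The core computation is a one-line quadratic-form argument. I would take an arbitrary nonzero vector $v \in \mathbb{R}^J$, set $w = A^\top v \in \mathbb{R}^K$, and observe that
\[
v^\top \bigl(A D A^\top\bigr) v = (A^\top v)^\top D (A^\top v) = w^\top D w = \sum_{k=1}^K \lambda_k^2\, w_k^2 .
\]
Because $A^\top$ has full column rank, $v \neq \mathbf{0}$ forces $w = A^\top v \neq \mathbf{0}$, so at least one component $w_k$ is nonzero; since every $\lambda_k^2 > 0$, the final sum is strictly positive. Hence $v^\top (A D A^\top) v > 0$ for all nonzero $v$, establishing positive definiteness, and a positive definite matrix is invertible. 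Symmetry of $A D A^\top$ is immediate from symmetry of $D$.

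The only genuine subtlety—and the step I would flag as the main obstacle—is the requirement that $D$ be \emph{strictly} positive definite rather than merely positive semidefinite. If some $\lambda_k^2$ were allowed to vanish, then $w^\top D w = 0$ would no longer force $w = \mathbf{0}$, and the argument would break down. In the Gaussian scale-mixture setting considered here, however, each $\lambda_k$ is drawn from a distribution supported on $\mathbb{R}^+$ (Cauchy${}^+$, Student-$t^+$, or a fixed positive hyperparameter), so $\lambda_k^2 > 0$ holds almost surely, and $D$ is positive definite on the relevant probability-one event. I would make this positivity assumption explicit at the outset so that the quadratic-form step goes through cleanly.
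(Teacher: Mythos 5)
Your proof is correct, but it takes a different route from the paper's. The paper factors the matrix as $A D A^\top = \tilde{A}\tilde{A}^\top$ with $\tilde{A} = A D^{1/2}$, then argues by rank: right-multiplication by the invertible $D^{1/2}$ preserves $\operatorname{rank}(A) = J$, and the Gram-matrix identity $\operatorname{rank}(\tilde{A}\tilde{A}^\top) = \operatorname{rank}(\tilde{A})$ gives full rank $J$ for the $J \times J$ matrix, hence invertibility. You instead prove the stronger statement that $A D A^\top$ is symmetric positive definite via a direct quadratic-form computation, using full column rank of $A^\top$ and strict positivity of the $\lambda_k^2$. Both arguments rest on the same two structural facts ($D \succ 0$ and $A$ of full row rank), and indeed the Gram-matrix rank identity the paper invokes is itself usually proved by exactly your null-space/quadratic-form argument, so your version is in some sense more self-contained and elementary. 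Your conclusion is also slightly stronger and not wasted: positive definiteness of $A D A^\top$ (not mere invertibility) is what makes the projection matrix $P = D^{1/2}A^\top(ADA^\top)^{-1}AD^{1/2}$ in Lemmas \ref{lemma.covps} and \ref{lemma.lowdimrank} well behaved. Your closing remark about the necessity of strict positivity of the $\lambda_k^2$ is apt and matches the paper's (terser) opening observation that $D$ is invertible because the $\lambda_k$ are standard deviations; making that assumption explicit is a genuine improvement in rigor, since the paper elsewhere loosely refers to $D$ as having ``non-negative'' diagonal elements, which would not suffice.
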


\begin{proof}
Since $\lambda_k>0, k=1,2,...,K$ are standard deviations of normal distributions, $D$ is invertible.

Let \( \tilde{A} := A D^{1/2} \in \mathbb{R}^{n \times K} \). Right multiply invertible matrix \( D^{1/2} \) does not change the rank of matrix \( A \):
$$
\text{rank}(\tilde{A}) = \text{rank}(A D^{1/2}) = \text{rank}(A) = J
$$
Note that
\[
A D A^\top = (A D^{1/2})(A D^{1/2})^\top = \tilde{A} \tilde{A}^\top
\]
so
\[
\text{rank}(A D A^\top) = \text{rank}(\tilde{A} \tilde{A}^\top) = \text{rank}(\tilde{A}) = \text{rank}(A) = J
\]

Therefore, \( A D A^\top \in \mathbb{R}^{J \times J} \) is of full rank and hence invertible.
\end{proof}

\begin{lemma}\label{lemma.covps}
The covariance matrix
\begin{equation}
\Sigma^* = D - D A^\top (A D A^\top)^{-1} A D.
\end{equation}
is positive semi-definite.
\end{lemma}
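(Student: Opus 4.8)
The plan is to show that the quadratic form $x^\top \Sigma^* x$ is nonnegative for every $x \in \mathbb{R}^K$. The key device is the factorization already exploited in Lemma~\ref{lemma.ADAinvertible}: since $D$ has strictly positive diagonal entries, its symmetric square root $D^{1/2}$ exists and is invertible, and writing $\tilde A := A D^{1/2}$ we have $A D A^\top = \tilde A \tilde A^\top$, which is invertible by Lemma~\ref{lemma.ADAinvertible}.

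First I would substitute $y := D^{1/2} x$ and rewrite
\begin{equation}
x^\top \Sigma^* x = x^\top D x - x^\top D A^\top (A D A^\top)^{-1} A D\, x = \|y\|^2 - y^\top \tilde A^\top (\tilde A \tilde A^\top)^{-1} \tilde A\, y,
\end{equation}
using $x^\top D x = \|y\|^2$ together with $A D = \tilde A D^{1/2}$ and $D A^\top = D^{1/2} \tilde A^\top$.

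Next I would set $P := \tilde A^\top (\tilde A \tilde A^\top)^{-1} \tilde A$ and verify that $P$ is the orthogonal projection onto the row space of $\tilde A$. A direct computation gives $P^\top = P$ (from the symmetry of $\tilde A \tilde A^\top$) and $P^2 = P$. Consequently $I - P$ is also a symmetric idempotent matrix, so
\begin{equation}
x^\top \Sigma^* x = y^\top (I - P)\, y = y^\top (I-P)^\top (I-P)\, y = \|(I-P)\, y\|^2 \ge 0.
\end{equation}
Since $x$ was arbitrary, $\Sigma^*$ is positive semi-definite.

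This argument involves no genuine obstacle; the only point requiring care is confirming that $P$ is a bona fide projection, which rests on the symmetry and invertibility of $\tilde A \tilde A^\top$ supplied by Lemma~\ref{lemma.ADAinvertible}. An alternative route would be to recognize $\Sigma^*$ as the Schur complement of the $(2,2)$ block in the joint covariance $\begin{bmatrix} D & D A^\top \\ A D & A D A^\top \end{bmatrix} = \begin{bmatrix} I \\ A \end{bmatrix} D \begin{bmatrix} I & A^\top \end{bmatrix}$; this block matrix is positive semi-definite because $D$ is, and the standard Schur-complement fact then yields the claim directly. I would favor the projection computation since it is fully self-contained and also makes transparent why $\Sigma^*$ is singular, anticipating the rank statement of Lemma~\ref{lemma.covrank}.
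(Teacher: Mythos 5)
Your proposal is correct and follows essentially the same route as the paper: both factor $\Sigma^*$ as $D^{1/2}(I-P)D^{1/2}$ with $P = D^{1/2}A^\top(ADA^\top)^{-1}AD^{1/2}$ (your $\tilde A^\top(\tilde A\tilde A^\top)^{-1}\tilde A$ is the same matrix), verify that $P$ is a symmetric idempotent projection, and conclude positive semi-definiteness of $I-P$ and hence of $\Sigma^*$. Your passage via the quadratic form $\|(I-P)y\|^2$ is a slightly cleaner way to finish than the paper's appeal to eigenvalues, but it is not a different argument.
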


\begin{proof}
Observe that
$$
\Sigma^* = D - D A^\top \left( A^\top D A \right)^{-1} A^\top D = D^{1/2}\Bigl[I - D^{1/2} A^\top \left( A D A^\top \right)^{-1} A D^{1/2}\Bigr]D^{1/2}.
$$

Let
$$
P = D^{1/2} A^\top \left( A D A^\top \right)^{-1} A D^{1/2}.
$$
Since $A$ is full row rank and $D$ is a diagonal matrix with non-negative diagonal elements, it is known that $ADA^\top$ is invertible, thus $P$ is well-defined.

We show that $P$ is a projection matrix

Symmetry: $P^\top = P$.

Idempotence:
 $$
 P^2 = D^{1/2} A^\top \left( A D A^\top \right)^{-1}A D^{1/2} D^{1/2} A^\top \left( A D A^\top \right)^{-1}A D^{1/2} = P,
 $$

Since $P$ is a symmetric and idempotent projection matrix, all eigenvalues of $P$ and $I-P$ can only be 1 or 0. Hence $I-P$ is positive semidefinite.

Finally,
$$
\Sigma^* = D^{1/2} \Bigl[I-P\Bigr] D^{1/2}.
$$
Since $D^{1/2}=\operatorname{diag}\left(\lambda_1,\lambda_2,\ldots,\lambda_K\right)$ is positive definite, the product of positive definite matrix and positive semi-positive matrix is still positive semi-definite. Therefore, $\Sigma^*$ is a semi-positive definite matrix.

\end{proof}

\begin{lemma}\label{lemma.covrank}
If the row rank of $A$ is $J$, the covariance matrix $\Sigma^* $ is of rank $K-J$. Consequently, for any single linear constraint (including sum-to-zero constraint), the covariance matrix $\Sigma^* $ is of rank $K-1$
\end{lemma}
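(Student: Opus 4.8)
The plan is to leverage the factorization already established in the proof of Lemma \ref{lemma.covps}, namely
$$
\Sigma^* = D^{1/2}\bigl[I - P\bigr]D^{1/2}, \qquad P = D^{1/2}A^\top\bigl(ADA^\top\bigr)^{-1}AD^{1/2},
$$
where $P$ is symmetric and idempotent. Since $D^{1/2} = \operatorname{diag}(\lambda_1,\ldots,\lambda_K)$ is invertible, pre- and post-multiplication by it preserves rank, so it suffices to compute $\operatorname{rank}(I-P)$.

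Next I would determine $\operatorname{rank}(P)$. Because $P$ is an idempotent projection, its rank equals its trace. Using the cyclic invariance of the trace,
$$
\operatorname{tr}(P) = \operatorname{tr}\!\bigl((ADA^\top)^{-1}AD^{1/2}D^{1/2}A^\top\bigr) = \operatorname{tr}\!\bigl((ADA^\top)^{-1}(ADA^\top)\bigr) = \operatorname{tr}(I_J) = J,
$$
where the invertibility of $ADA^\top$ is Lemma \ref{lemma.ADAinvertible}. Hence $\operatorname{rank}(P) = J$.

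Since $P$ and $I-P$ are complementary projections (both idempotent, with $P(I-P)=0$), their ranks sum to $K$; equivalently, the eigenvalues of $P$ are all $0$ or $1$ with exactly $J$ ones, so $I-P$ has exactly $K-J$ unit eigenvalues. Therefore $\operatorname{rank}(I-P) = K-J$, and consequently $\operatorname{rank}(\Sigma^*) = K-J$. Specializing to a single linear constraint sets $J=1$, giving $\operatorname{rank}(\Sigma^*) = K-1$; the sum-to-zero constraint $A = \mathbf{1}^\top$ is precisely this case.

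The computation is essentially routine once the projection factorization is in hand; the only point requiring a little care is the identification $\operatorname{rank}(P) = J$, for which the trace argument (relying on idempotence and the cyclic property) is cleaner than directly analyzing the column space of $D^{1/2}A^\top$. An alternative route would observe that $P$ projects onto the column space of $D^{1/2}A^\top$, whose dimension equals $\operatorname{rank}(A) = J$ because $D^{1/2}$ is invertible — but invoking the idempotence already proved in Lemma \ref{lemma.covps} makes the trace computation the most direct path.
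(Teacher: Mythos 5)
Your proof is correct and follows essentially the same route as the paper: factor $\Sigma^* = D^{1/2}[I-P]D^{1/2}$, show $\operatorname{rank}(P)=J$, use the complementary-projection structure to get $\operatorname{rank}(I-P)=K-J$, and invoke invertibility of $D^{1/2}$ (congruence) to conclude. The only cosmetic difference is that you compute $\operatorname{rank}(P)$ via the trace of an idempotent matrix, whereas the paper identifies the range of $P$ with the column space of $D^{1/2}A^\top$ — the alternative you yourself note at the end — and both are equally valid.
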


\begin{proof}
Since \( P \) is a symmetric and idempotent projection matrix, its rank equals the dimension of the subspace onto which it projects (i.e. the number of eigenvalues of 1), which is $D^{1/2}A^\top$. For $A\in\mathbb{R}^{J\times K}$, we have 
$$
\operatorname{rank}(P) = \operatorname{rank}(D^{1/2}A^\top) = \operatorname{rank}(A^{\top}) = J
$$

The matrix\(i-p\) is also a projection matrix with eigenvalues of 0 or 1. The number of its eigenvalue 0 (1) is equal to the number of eigenvalues in P with 1 (0). Since the rank of \( P \) is J, the rank of \( I - P \) is $K - J$.

Finally, since \( D^{1/2} \) is a full-rank matrix, $I-P$ and $D^{1/2}[I - P]D^{1/2}$ are congruent, so they have the same rank:
$$
\operatorname{rank}(\Sigma^*) = \operatorname{rank}\left(D^{1/2}[I - P]D^{1/2}\right) = \operatorname{rank}(I - P) = K - J.
   $$

In particular, sum-to-zero constraint corresponding to $\operatorname{rank}(A)=1$, so $\operatorname{rank}(\Sigma^*)=K-1$.

\end{proof}

\begin{lemma}\label{lemma.lowdimrank}
The transformed low-dimensional covariance matrix
$$
\Omega = M^\top \Sigma^* M \in \mathbb{R}^{(K-J)\times (K-J)}.
$$
is positive definite. So that we can preform Cholesky decomposition.
\end{lemma}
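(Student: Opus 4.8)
The plan is to exploit the positive semidefiniteness of $\Sigma^*$ already established in Lemma \ref{lemma.covps}, together with the fact that $M$ has orthonormal columns spanning the null space $N(A)$. First I would observe that for any $y \in \mathbb{R}^{K-J}$ we have $y^\top \Omega y = (My)^\top \Sigma^* (My) \ge 0$, so $\Omega$ is automatically positive semidefinite. The whole task therefore reduces to ruling out nontrivial null vectors, i.e.\ to showing that $y^\top \Omega y = 0$ forces $y = 0$.

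Next I would translate the condition $y^\top \Omega y = 0$ into a statement about $N(\Sigma^*)$. Since $\Sigma^*$ is PSD, $(My)^\top \Sigma^*(My) = 0$ is equivalent to $\Sigma^*(My) = 0$, i.e.\ $My \in N(\Sigma^*)$. The crucial step is to identify this null space explicitly. Using the decomposition $\Sigma^* = D^{1/2}(I-P)D^{1/2}$ with $P = D^{1/2}A^\top(ADA^\top)^{-1}AD^{1/2}$ from Lemma \ref{lemma.covps}, and invertibility of $D^{1/2}$, I get $\Sigma^* x = 0 \iff (I-P)D^{1/2}x = 0 \iff D^{1/2}x \in \operatorname{range}(P) = \operatorname{col}(D^{1/2}A^\top)$, which simplifies to $x \in \operatorname{col}(A^\top)$. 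Hence $N(\Sigma^*) = \operatorname{col}(A^\top)$, the row space of $A$.

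Finally I would combine two orthogonality facts. By construction the columns of $M$ span $N(A)$, so $My \in N(A)$; by the previous step $My \in \operatorname{col}(A^\top) = N(A)^\perp$. The fundamental relation $N(A) \cap N(A)^\perp = \{0\}$ then forces $My = 0$, and because $M^\top M = I_{K-J}$ (so $M$ has full column rank) this yields $y = 0$. Therefore $\Omega$ is positive definite, and the Cholesky factorization used in the sampling algorithm is well defined.

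The main obstacle is the second step: pinning down $N(\Sigma^*)$ precisely as the row space of $A$. Everything else is routine once the projection structure from Lemma \ref{lemma.covps} is invoked. One could alternatively argue via ranks—$\Omega$ is a $(K-J)\times(K-J)$ PSD matrix and $\operatorname{rank}(\Sigma^*) = K-J$ by Lemma \ref{lemma.covrank}, so it would suffice to check $\operatorname{rank}(\Omega) = K-J$—but verifying that restricting $\Sigma^*$ to $\operatorname{col}(M)$ preserves the rank still reduces to the same null-space and orthogonality computation. I would therefore present the direct null-space argument.
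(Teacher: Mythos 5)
Your proof is correct and follows essentially the same route as the paper: both exploit the decomposition $\Sigma^* = D^{1/2}(I-P)D^{1/2}$ from Lemma \ref{lemma.covps} and show that a null vector of $\Omega$ would have to lie in $N(A)$ while, after the $D^{1/2}$ change of variables, also lying in $\operatorname{range}(P)=\operatorname{col}(D^{1/2}A^\top)$, which forces it to vanish. The only cosmetic difference is that you phrase the final contradiction as $N(A)\cap\operatorname{col}(A^\top)=\{0\}$, whereas the paper derives $AA^\top u=0$ and invokes invertibility of $AA^\top$ --- the same fact in different clothing.
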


\begin{proof}

Let \( z \in \mathbb{R}^{K-J} \neq 0 \), and \( v = M z \in \mathbb{R}^K \). Since the columns of \( M \) are linearly independent and \( z \neq 0 \), we have \( v \neq 0 \) and \( A v = 0 \).

\[
\begin{aligned}
z^\top \Omega z &= z^\top M^\top \Sigma^* M z \\
&= v^\top \Sigma^* v \\
&= v^\top D v - v^\top D A^\top (A D A^\top)^{-1} A D v.
\end{aligned}
\]

Denote $w = D^{1/2} v$, $D$ is a symmetric positive definite matrix.

\[
\begin{aligned}
v^\top D v &= w^\top w, \\
v^\top D A^\top (A D A^\top)^{-1} A D v &= w^\top P w,
\end{aligned}
\]
where \( P = D^{1/2} A^\top (A D A^\top)^{-1} A D^{1/2} \)

we note that \( P \) is a orthogonal projection matrix,
\[
P^2 = P, \quad P^\top = P.
\]
which project $\mathbb{R}^K$ to the column space of $N = D^{1/2}A^\top$.

We note that
\[
v^\top \Sigma^* v = w^\top (I - P) w.
\]
Since \( I - P \) is a projection matrix to \( \text{Im}(P)^\perp \), \( I - P \) is semi-positive:
\[
w^\top (I - P) w \geq 0.
\]

Assume there exists a non-zero \( z \) such that \( z^\top \Omega z = 0 \), that is:

$$
w^\top (I - P) w = 0.
$$

Since \( I - P \) is a projection matrix, the equation holds if and only if \( w \in \text{Im}(P) \). In this case, there exists \( u \in \mathbb{R}^m \) such that:

$$
w = D^{1/2} A^\top u.
$$

However, \( w = D^{1/2} v \), and \( v \in N(A) \), so:

$$
A v = A D^{-1/2} w = A D^{-1/2} (D^{1/2} A^\top u) = A A^\top u = 0.
$$

Since \( A \) is full row rank, \( A A^\top \) is invertible, hence \( u = 0 \), resulting in \( w = 0 \), and thus \( v = D^{-1/2} w = 0 \), which contradicts \( v = M z \neq 0 \).

Therefore, for any non-zero \( z \in \mathbb{R}^{K-J} \), \( z^\top \Omega z > 0 \), hence \( \Omega \) is a positive definite matrix.

\end{proof}

\subsection{Supplementary algorithms}

\begin{algorithm}[H]
\label{algo.orthonormal_basis}
\caption{Compute orthonormal basis for $N(A)$}
\KwIn{Matrix $A\in \mathbb{R}^{J\times K}$}
\KwOut{Matrix $M\in \mathbb{R}^{K\times (K-J)}$ satisfying $AM = 0$ and $M^\top M = I_{K-J}$}

% Step 1: Compute SVD
Compute the Singular Value Decomposition (SVD) of $A$: \\
\quad $[U,\Sigma,V]\gets \text{SVD}(A)$\;

% Step 2: Extract the basis for nullspace
Extract the last $(K-J)$ columns of $V$: \\
\quad $M \gets V(:,J+1:K)$\;

\textbf{return} $M$\;
\end{algorithm}

\begin{algorithm}[H]
\label{algo.fastsumzerobasis}
\caption{Compute orthonormal basis for $N(A), A=(1,1,...,1)\in \mathbb{R}^K$}
\KwIn{Positive integer $K$, with $A=(1,1,\ldots,1)\in\mathbb{R}^{1\times K}$}
\KwOut{$M\in\mathbb{R}^{K\times (K-1)}$ satisfying $AM=0$ and $M^\top M=I_{K-1}$}

\BlankLine
% Step 1: Initialize M as a zero matrix
Initialize $M$ as a $K\times (K-1)$ zero matrix\;

\BlankLine
% Step 2: Construct the i-th column of M
\For{$i\gets 1$ \KwTo $K-1$}{
    \For{$j\gets 1$ \KwTo $i$}{
        Set $M(j,i) \gets \frac{1}{\sqrt{i(i+1)}}$\;
    }
    Set $M(i+1,i) \gets -\frac{i}{\sqrt{i(i+1)}}$\;
    % For j > i+1, the default zero remains.
}
\BlankLine
\textbf{return} $M$\;
\end{algorithm}

\begin{comment}

\subsection{Stan implementations}

\begin{lstlisting}[language=Stan, style=lgeneral]
/** Sample from a multivariate normal with sum-to-zero constraint
  *
  * @param x: RV in n-1 dimensional space following standard normal distribution
  */
  vector sum_zero_std_normal(vector x) {
    int n = rows(x) + 1;
    vector[n] ones = ones_vector(n);
    matrix[n, n] In = identity_matrix(n);
    matrix[n, n] P = In - 1./(n*1.) * ones * ones';
    P = P * (n / (n-1.0));
    matrix[n, n-1] M = sum_zero_map(n);
    matrix[n-1, n-1] S = M' * P * M;
    matrix[n-1, n-1] L = cholesky_decompose(S);
    return M * (L * x);
  }
\end{lstlisting}

Stan provide \verb|sum_to_zero_vector| type using isometric logratio transformations \parencite{Egozcue2003}, since Stan version 2.36.0. The above code has the same effect as the following code.
\begin{lstlisting}[language=Stan, style=lgeneral]
parameters {
  sum_to_zero_vector[K] beta;
}
model {
  beta ~ normal(0, 1)
}
\end{lstlisting}

\end{comment}

\bibliographystyle{unsrtnat}
\bibliography{references}

\end{document}